\providecommand{\tabularnewline}{\\}
\newtheorem{claim}{Claim}
\tikzset{
    state/.style={
           rectangle,
  fill=#1!5!white,
           draw=#1, very thick,
           minimum height=2em,
           inner sep=2pt,
           text centered,
           },
    highlight/.style={
           rectangle,
  fill=#1!50!white,
           rounded corners,
           draw=#1, very thick,
           minimum height=2em,
           inner sep=2pt,
           text centered,
           },
coeff/.style={
           circle,
           draw=black, very thick,
           minimum height=2em,
           inner sep=2pt,
           text centered,
           },
ptNode/.style={circle, fill=black,thick, inner sep=2pt, minimum size=0.2cm}	,
square/.style={regular polygon,regular polygon sides=4},
ptNodeSq/.style={square, fill=black,thick, inner sep=2pt, minimum size=0.2cm},	
}
\tikzset{
    invisible/.style={opacity=0},
    visible on/.style={alt=#1{}{invisible}},
    alt/.code args={<#1>#2#3}{%
      \alt<#1>{\pgfkeysalso{#2}}{\pgfkeysalso{#3}} 
    },
  }
\tikzset{fontscale/.style = {font=\relsize{#1}}
    }
\definecolor{lightgray}{gray}{0.9}
\definecolor{S_purple}{RGB}{204, 0, 204}
\definecolor{S_brique}{RGB}{204, 51, 0}
\definecolor{S_petrol}{RGB}{0, 102, 153}
\definecolor{S_green}{RGB}{0, 153, 0}
\definecolor{Special_blue}{rgb}{0.03, 0.35, 0.49}
\begin{document}
\title{Tuning Ranking in Co-occurrence Networks with General Biased Exchange-based Diffusion on Hyper-bag-graphs}
\author{
	Xavier Ouvrard\thanks{xavier.ouvrard@cern.ch}\\
CERN,\\ 
Esplanade des Particules, 1,\\ 
CH-1211 Meyrin (Switzerland)\\
	\And 
Jean-Marie Le~Goff\\
CERN,\\ Esplanade des Particules, 1,\\ CH-1211 Meyrin (Switzerland)\\
	\And 
St{\'e}phane Marchand-Maillet\\
University of Geneva,\\CUI, Battelle (Bat A),\\ Route de Drize, 7,\\ CH-1227 Carouge (Switzerland)
}
\maketitle
\begin{abstract}
Co-occurence networks can be adequately modeled by hyper-bag-graphs (hb-graphs for short). A hb-graph is a family of multisets having same universe, called the vertex set. An efficient exchange-based diffusion scheme has been previously proposed that allows the ranking of both vertices and hb-edges. In this article, we extend this scheme to allow biases of different kinds and explore their effect on the different rankings obtained. The biases enhance the emphasize on some particular aspects of the network.
\end{abstract}

\section{Introduction}

Co-occurrence network can be modeled efficiently by using hyper-bag-graphs
(hb-graphs for short) introduced in \cite{ouvrard2018adjacency}.
Depending on the information the co-occurrence network carries, the
ranking of the information hold by the associated hb-graph has to
be performed on different features, and the importance stressed on
the lower, higher or medium values. Hence, the necessity of extending
the exchange-based diffusion that is already coupled to a biased random
walk given in \cite{ouvrard2019diffusion} to a more general approach
using biases. We start by giving the background in Section \ref{sec:Mathematical-Background-and}.
We then propose a framework to achieve such a kind of diffusion in
Section \ref{sec:Biased-Diffusion-in} and evaluate it in Section
\ref{sec:Results-and-Evaluation}, before concluding in Section \ref{sec:Further-Comments}.

\section{Mathematical Background and Related Work}

\label{sec:Mathematical-Background-and}

A hb-graph $\mathfrak{H}=\left(V,\mathfrak{E}\right)$ is a family
of multisets $\mathfrak{E}=\left(\mathfrak{e}_{j}\right)_{j\in\left\llbracket p\right\rrbracket }$
of same universe $V=\left\{ v_{i}:i\in\left\llbracket n\right\rrbracket \right\} .$
The elements of $\mathfrak{E}$ are called the hb-edges; each hb-edge
$\mathfrak{e}_{j},$ $j\in\left\llbracket p\right\rrbracket ,$ is
a multiset of universe $V$ and of multiplicity function: $m_{\mathfrak{e}_{j}}:V\rightarrow\mathbb{R}^{+}.$
The m-cardinality $\#_{m}\mathfrak{e}_{j}$ of a hb-edge is: $\#_{m}\mathfrak{e}_{j}=\sum\limits _{i\in\text{\ensuremath{\left\llbracket n\right\rrbracket }}}m_{\mathfrak{e}_{j}}\left(v_{i}\right).$
For more information on hb-graphs, the interested reader can refer
to \cite{ouvrard2019hbgraphs} for a full introduction. A weighted
hb-graph has hb-edges having a weight given by: $w_{e}:\mathfrak{E}\rightarrow\mathbb{R}^{+}.$

In \cite{dehmer2011history}, the authors introduce an abstract information
function $f:V\rightarrow\mathbb{R}^{+}$ which is associated to a
probability for each vertex $v_{i}\in V:$ $p^{f}\left(v_{i}\right)=\dfrac{f\left(v_{i}\right)}{\sum\limits _{j\in\left\llbracket \left|V\right|\right\rrbracket }f\left(v_{j}\right)}.$
In \cite{zlatic2010topologically}, a bias is introduced in the transition
probability of a random walk in order to explore communities in a
network. The bias is either related to a vertex property $x_{i}$
such as the degree or to an edge property $y_{j}$ such as the edge
multiplicity or the shortest path betweenness. For a vertex, the new
transition probability between vertex $v_{i}$ and $v_{j}$ is given
by: $T_{ij}\left(x,\beta\right)=\dfrac{a_{ij}e^{\beta x_{i}}}{\sum\limits _{l}a_{lj}e^{\beta x_{l}}},$where
$A=\left(a_{ij}\right)_{i,j\in\left\llbracket n\right\rrbracket }$
is the adjacency matrix of the graph and $\beta$ is a parameter.

A same kind of bias, can be used related to the edges and can be combined
to the former to have the overall transition probability from one
vertex to another.

\section{Biased Diffusion in Hb-graphs}

\label{sec:Biased-Diffusion-in}

We consider a weighted hb-graph $\mathcal{\mathfrak{H}}=\left(V,\mathfrak{E},w_{e}\right)$
with $V=\left\{ v_{i}:i\in\left\llbracket n\right\rrbracket \right\} $
and $\mathfrak{E}=\left(\mathfrak{e}_{j}\right)_{j\in\left\llbracket p\right\rrbracket };$
we write $H=\left[m_{\mathfrak{e_{j}}}\left(v_{i}\right)\right]_{\substack{i\in\left\llbracket n\right\rrbracket \\
j\in\left\llbracket p\right\rrbracket 
}
}$ the incidence matrix of the hb-graph.

\subsection{Abstract Information Functions and Bias}

\index{abstract information function}We consider a \textbf{hb-edge
based vertex abstract information function: }$f_{V}:V\times E\rightarrow\mathbb{R}^{+}.$
The exchange-based diffusion presented in \cite{ouvrard2018diffusion,ouvrard2019diffusion}
is a particular example of biased diffusion, where the biases are
given in Table \ref{Tab: Features and Bias exchange-based diff}.
An unbiased diffusion would be to have a vertex abstract function
and a hb-edge vertex function that is put to 1 for every vertices
and hb-edges, i.e.\ equi-probability for every vertices and every
hb-edges.

\begin{table}
\begin{center}%
\begin{tabular}{>{\centering}p{7cm}c}
\cellcolor{blue!10}Hb-edge based vertex abstract information function & \cellcolor{cyan!10}$f_{V}\left(v_{i},\mathfrak{e}_{j}\right)=m_{j}\left(v_{i}\right)w\left(\mathfrak{e}_{j}\right)$\tabularnewline
\cellcolor{blue!30}Vertex abstract information function & \cellcolor{cyan!30}$F_{V}\left(v_{i}\right)=d_{w,v_{i}}$\tabularnewline
\cellcolor{blue!10}Vertex bias function & \cellcolor{cyan!10}$g_{V}\left(x\right)=x$\tabularnewline
\cellcolor{blue!30}Vertex overall bias & \cellcolor{cyan!30}$G_{V}\left(v_{i}\right)=d_{w,v_{i}}$\tabularnewline
\cellcolor{blue!10}Vertex-based hb-edge abstract information function & \cellcolor{cyan!10}$f_{E}\left(\mathfrak{e}_{j},v_{i}\right)=m_{j}\left(v_{i}\right)w\left(\mathfrak{e}_{j}\right)$\tabularnewline
\cellcolor{blue!30}Hb-edge abstract information function & \cellcolor{cyan!30}$F_{E}\left(\mathfrak{e}_{j}\right)=w\left(\mathfrak{e}_{j}\right)\#_{m}\mathfrak{e}_{j}$\tabularnewline
\cellcolor{blue!10}Hb-edge bias function & \cellcolor{cyan!10}$g_{E}\left(x\right)=x$\tabularnewline
\cellcolor{blue!30}Hb-edge overall bias & \cellcolor{cyan!30}$G_{E}\left(\mathfrak{e}_{j}\right)=w\left(\mathfrak{e}_{j}\right)\#_{m}\mathfrak{e}_{j}$\tabularnewline
\end{tabular}\end{center}

\caption{Features used in the exchange-based diffusion in \cite{ouvrard2018diffusion}.}

\label{Tab: Features and Bias exchange-based diff}
\end{table}

The \textbf{vertex abstract information function} is defined as the
function: $F_{V}:V\rightarrow\mathbb{R}^{+}$ such that: $F_{V}\left(v_{i}\right)\overset{\Delta}{=}\sum\limits _{j\in\left\llbracket p\right\rrbracket }f_{V}\left(v_{i},\mathfrak{e}_{j}\right).$
The \textbf{probability corresponding to this hb-edge based vertex
abstract information} as: $p^{f_{V}}\left(\mathfrak{e}_{j}|v_{i}\right)\overset{\Delta}{=}\dfrac{f_{V}\left(v_{i},\mathfrak{e}_{j}\right)}{F_{V}\left(v_{i}\right)}.$
If we now consider a \textbf{vertex bias function}: $g_{V}:\mathbb{R}^{+}\rightarrow\mathbb{R}^{+}$
applied to $f_{V}\left(v_{i},\mathfrak{e}_{j}\right),$ we can define
a \textbf{biased probability on the transition from vertices to hb-edges}
as: 
\[
\widetilde{p_{V}}\left(\mathfrak{e}_{j}|v_{i}\right)\overset{\Delta}{=}\dfrac{g_{V}\left(f_{V}\left(v_{i},\mathfrak{e}_{j}\right)\right)}{G_{V}\left(v_{i}\right)},
\]
where $G_{V}\left(v_{i}\right)$, the \textbf{vertex overall bias},
is defined as: $G_{V}\left(v_{i}\right)\overset{\Delta}{=}\sum\limits _{j\in\left\llbracket p\right\rrbracket }g_{V}\left(f_{V}\left(v_{i},\mathfrak{e}_{j}\right)\right).$

Typical choices for $g_{V}$ are: $g_{V}\left(x\right)=x^{\alpha}$
or $g_{V}\left(x\right)=e^{\alpha x}.$ When $\alpha>0$, higher values
of $f_{V}$ are encouraged, and on the contrary, when $\alpha<0$
smaller values of $f_{V}$ are encouraged.

Similarly,the \textbf{vertex-based hb-edge abstract information function}
is defined as the function: $f_{E}:E\times V\rightarrow\mathbb{R}^{+}.$
The \textbf{hb-edge abstract information function} is defined as the
function: $F_{E}:V\rightarrow\mathbb{R}^{+},$ such that: $F_{E}\left(\mathfrak{e}_{j}\right)\overset{\Delta}{=}\sum\limits _{i\in\left\llbracket n\right\rrbracket }f_{E}\left(\mathfrak{e}_{j},v_{i}\right).$
The \textbf{probability corresponding to the vertex-based hb-edge
abstract information} is defined as: $p^{f_{E}}\left(v_{i}|\mathfrak{e}_{j}\right)\overset{\Delta}{=}\dfrac{f_{E}\left(\mathfrak{e}_{j},v_{i}\right)}{F_{E}\left(\mathfrak{e}_{j}\right)}.$
Considering a vertex bias function: $g_{E}:\mathbb{R}^{+}\rightarrow\mathbb{R}^{+}$
applied to $f_{E}\left(\mathfrak{e}_{j},v_{i}\right),$ a \textbf{biased
probability on the transition from hb-edges to vertices} is defined
as:
\[
\widetilde{p_{E}}\left(v_{i}|\mathfrak{e}_{j}\right)\overset{\Delta}{=}\dfrac{g_{E}\left(f_{E}\left(\mathfrak{e}_{j},v_{i}\right)\right)}{G_{E}\left(\mathfrak{e_{j}}\right)},
\]
where the hb-edge overall bias $G_{E}\left(\mathfrak{e}_{j}\right)$
is defined as: $G_{E}\left(\mathfrak{e}_{j}\right)\overset{\Delta}{=}\sum\limits _{i\in\left\llbracket n\right\rrbracket }g_{E}\left(f_{E}\left(\mathfrak{e}_{j},v_{i}\right)\right).$

Typical choices for $g_{E}$ are: $g_{E}\left(x\right)=x^{\alpha}$
or $g_{E}\left(x\right)=e^{\alpha x}.$ When $\alpha>0$, higher values
of $f_{E}$ are encouraged, and on the contrary, when $\alpha<0$
smaller values of $f_{E}$ are encouraged.

\subsection{Biased Diffusion by Exchange}

A two-phase step diffusion by exchange is now considered---with a
similar approach to \cite{ouvrard2018diffusion,ouvrard2019diffusion}---,
taking into account the biased probabilities on vertices and hb-edges.

The vertices hold an information value at time $t$ given by: $\alpha_{t}:\left\{ \begin{array}{c}
V\rightarrow\left[0;1\right]\\
v_{i}\mapsto\alpha_{t}\left(v_{i}\right)
\end{array}\right..$

The hb-edges hold an information value at time $t$ given by: $\epsilon_{t}:\left\{ \begin{array}{c}
\mathfrak{E}\rightarrow\left[0;1\right]\\
\mathfrak{e}_{j}\mapsto\epsilon_{t}\left(\mathfrak{e}_{j}\right)
\end{array}\right..$

We write $P_{V,t}=\left(\alpha_{t}\left(v_{i}\right)\right)_{i\in\left\llbracket n\right\rrbracket }$
the row state vector of the vertices at time $t$ and $P_{\mathfrak{E},t}=\left(\epsilon_{t}\left(\mathfrak{e}_{j}\right)\right)_{j\in\left\llbracket p\right\rrbracket }$
the row state vector of the hb-edges. We call information value of
the vertices, the value: $I_{t}\left(V\right)=\sum\limits _{v_{i}\in V}\alpha_{t}\left(v_{i}\right)$
and $I_{t}\left(\mathfrak{E}\right)=\sum\limits _{\mathfrak{e}_{j}\in\mathfrak{E}}\epsilon_{t}\left(\mathfrak{e}_{j}\right)$
the one of the hb-edges. We write: $I_{t}\left(\mathfrak{H}\right)=I_{t}\left(V\right)+I_{t}\left(\mathfrak{E}\right).$

The initialisation is done such that $I_{0}\left(\mathfrak{H}\right)=1.$
At the diffusion process start, the vertices concentrate uniformly
and exclusively all the information value. Writing $\alpha_{\text{ref}}=\dfrac{1}{\left|V\right|},$
we set for all $v_{i}\in V:$ $\alpha_{0}\left(v_{i}\right)=\alpha_{\text{ref}}$
and for all $\mathfrak{e}_{j}\in\mathfrak{E},$ $\epsilon_{0}\left(\mathfrak{e}_{j}\right)=0.$

At every time step, the first phase starts at time $t$ and ends at
$t+\dfrac{1}{2},$ where values held by the vertices are shared completely
to the hb-edges, followed by the second phase between time $t+\dfrac{1}{2}$
and $t+1$, where the exchanges take place the other way round. The
exchanges between vertices and hb-edges aim at being conservative
on the global value of $\alpha_{t}$ and $\epsilon_{t}$ distributed
over the hb-graph.

\textbf{During the first phase between time $t$ and time $t+\dfrac{1}{2}$},
the contribution to the value $\epsilon_{t+\frac{1}{2}}\left(\mathfrak{e}_{j}\right)$
from the vertex $v_{i}$ is given by:
\[
\delta\epsilon_{t+\frac{1}{2}}\left(\mathfrak{e}_{j}|v_{i}\right)=\widetilde{p_{V}}\left(\mathfrak{e}_{j}|v_{i}\right)\alpha_{t}\left(v_{i}\right)
\]
and:
\[
\epsilon_{t+\frac{1}{2}}\left(\mathfrak{e}_{j}\right)=\sum\limits _{i=1}^{n}\delta\epsilon_{t+\frac{1}{2}}\left(\mathfrak{e}_{j}\mid v_{i}\right).
\]
We have:
\[
\alpha_{t+\frac{1}{2}}\left(v_{i}\right)=\alpha_{t}\left(v_{i}\right)-\sum\limits _{j=1}^{p}\delta\epsilon_{t+\frac{1}{2}}\left(\mathfrak{e}_{j}\mid v_{i}\right).
\]

\begin{claim}[No information on vertices at $t+\dfrac{1}{2}$]

It holds: $\forall i\in\left\llbracket n\right\rrbracket :\alpha_{t+\frac{1}{2}}\left(v_{i}\right)=0.$\end{claim}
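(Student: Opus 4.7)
The plan is to unpack the definition of $\alpha_{t+\frac{1}{2}}(v_i)$ and substitute in the expression for $\delta\epsilon_{t+\frac{1}{2}}(\mathfrak{e}_j | v_i)$, then show the sum over $j$ telescopes to $\alpha_t(v_i)$. The key observation is that $\widetilde{p_V}(\cdot | v_i)$ is a probability distribution over hb-edges, so summing it over all $j \in \llbracket p \rrbracket$ yields $1$.

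Concretely, I would first factor $\alpha_t(v_i)$ out of the sum, using the definition $\delta\epsilon_{t+\frac{1}{2}}(\mathfrak{e}_j | v_i) = \widetilde{p_V}(\mathfrak{e}_j | v_i)\,\alpha_t(v_i)$, to obtain
\[
\alpha_{t+\frac{1}{2}}(v_i) = \alpha_t(v_i)\left(1 - \sum_{j=1}^{p}\widetilde{p_V}(\mathfrak{e}_j | v_i)\right).
\]
Then I would substitute the definition
\[
\widetilde{p_V}(\mathfrak{e}_j | v_i) = \frac{g_V(f_V(v_i,\mathfrak{e}_j))}{G_V(v_i)},
\]
pull the $v_i$-dependent denominator $G_V(v_i)$ out of the sum, and invoke the defining identity $G_V(v_i) = \sum_{j\in\llbracket p\rrbracket} g_V(f_V(v_i,\mathfrak{e}_j))$. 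This makes the sum collapse to $1$, so the parenthesised factor vanishes and $\alpha_{t+\frac{1}{2}}(v_i) = 0$.

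There is essentially no obstacle here: the statement is a direct consequence of the normalisation built into $\widetilde{p_V}$. The only subtlety worth flagging is the implicit well-definedness assumption that $G_V(v_i) \neq 0$; this holds as long as $v_i$ participates in at least one hb-edge with $g_V(f_V(v_i,\mathfrak{e}_j)) > 0$, which is the standard non-degeneracy hypothesis on the diffusion (isolated vertices, or vertices zeroed out by $g_V$, would have to be excluded or treated separately by convention, e.g.\ leaving $\alpha_{t+\frac{1}{2}}(v_i) = \alpha_t(v_i)$, but this does not affect the generic case claimed here).
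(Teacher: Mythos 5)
Your proof is correct and follows essentially the same route as the paper: factor out $\alpha_t(v_i)$, substitute the definition of $\widetilde{p_V}$, and use the normalisation $G_V(v_i)=\sum_{j\in\left\llbracket p\right\rrbracket} g_V\left(f_V\left(v_i,\mathfrak{e}_j\right)\right)$ to make the bracketed factor vanish. Your extra remark on the implicit non-degeneracy condition $G_V(v_i)\neq 0$ is a sensible observation the paper leaves tacit, but it does not change the argument.
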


\begin{proof}

For all $i\in\left\llbracket n\right\rrbracket :$ 
\[
\alpha_{t+\frac{1}{2}}\left(v_{i}\right)=\alpha_{t}\left(v_{i}\right)\left(1-\dfrac{\sum\limits _{j\in\left\llbracket p\right\rrbracket }g_{V}\left(f_{V}\left(v_{i},\mathfrak{e}_{j}\right)\right)}{G_{V}\left(v_{i}\right)}\right)=0.
\]
\end{proof}

\begin{claim}[Conservation of the information of the hb-graph at $t+\dfrac{1}{2}$]

It holds: $I_{t+\frac{1}{2}}\left(\mathfrak{H}\right)=1.$

\end{claim}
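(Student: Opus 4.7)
The plan is to split $I_{t+\frac{1}{2}}(\mathfrak{H})$ into its vertex and hb-edge parts, handle the vertex part by the previous claim, and reduce the hb-edge part to $I_{t}(V)$ using the fact that $\widetilde{p_{V}}(\cdot\mid v_{i})$ is a probability distribution over $\mathfrak{E}$.

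First, writing $I_{t+\frac{1}{2}}(\mathfrak{H})=I_{t+\frac{1}{2}}(V)+I_{t+\frac{1}{2}}(\mathfrak{E})$, the preceding claim gives $I_{t+\frac{1}{2}}(V)=\sum_{i}\alpha_{t+\frac{1}{2}}(v_{i})=0$, so only the hb-edge term remains to be controlled. Substituting the defining expression for $\epsilon_{t+\frac{1}{2}}(\mathfrak{e}_{j})$ yields
\[
I_{t+\frac{1}{2}}(\mathfrak{E})=\sum_{j=1}^{p}\sum_{i=1}^{n}\widetilde{p_{V}}(\mathfrak{e}_{j}\mid v_{i})\,\alpha_{t}(v_{i}),
\]
and swapping the order of summation separates $\alpha_{t}(v_{i})$ from the sum $\sum_{j}\widetilde{p_{V}}(\mathfrak{e}_{j}\mid v_{i})$. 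By the very definition of $G_{V}(v_{i})$ as the normalising constant of $g_{V}\circ f_{V}(v_{i},\cdot)$, this inner sum equals $1$, so $I_{t+\frac{1}{2}}(\mathfrak{E})=\sum_{i}\alpha_{t}(v_{i})=I_{t}(V)$.

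It then remains to identify $I_{t}(V)$ with $1$. The natural route is induction on the integer-time step: at $t=0$ the initialisation sets $\alpha_{0}(v_{i})=\alpha_{\text{ref}}$ and $\epsilon_{0}(\mathfrak{e}_{j})=0$, so $I_{0}(V)=|V|\cdot\alpha_{\text{ref}}=1$ and thus $I_{0}(\mathfrak{H})=1$; the inductive step relies on the symmetric claim, to be established for the second half-step, that all information returns entirely to the vertices between $t+\frac{1}{2}$ and $t+1$, giving $I_{t+1}(V)=I_{t+\frac{1}{2}}(\mathfrak{E})=1$. Combining this with $I_{t+\frac{1}{2}}(V)=0$ closes the argument.

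The only non-routine point is the dependence on the symmetric second-phase claim for integer $t\geq 1$; for the freshly initialised case $t=0$ the calculation is self-contained, and for later steps the identity $\sum_{j}\widetilde{p_{V}}(\mathfrak{e}_{j}\mid v_{i})=1$ plus the induction hypothesis suffice. There are no analytic difficulties: the entire argument consists of exchanging a finite double sum and invoking the probability-normalisation identity built into the definition of $\widetilde{p_{V}}$.
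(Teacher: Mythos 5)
Your proposal is correct and follows essentially the same route as the paper: split $I_{t+\frac{1}{2}}(\mathfrak{H})$ into vertex and hb-edge parts, kill the vertex part via the previous claim, swap the double sum, and use that $\sum_{j}\widetilde{p_{V}}(\mathfrak{e}_{j}\mid v_{i})=G_{V}(v_{i})/G_{V}(v_{i})=1$. The only difference is that you make explicit, via induction on the time step and the initialisation $I_{0}(V)=1$, the identity $\sum_{i}\alpha_{t}(v_{i})=1$ that the paper simply asserts in its final line --- a welcome clarification, not a different argument.
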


\begin{proof}

We have:
\begin{align*}
I_{t+\frac{1}{2}}\left(\mathfrak{H}\right) & =I_{t+\frac{1}{2}}\left(V\right)+I_{t+\frac{1}{2}}\left(\mathfrak{E}\right)\\
 & =I_{t+\frac{1}{2}}\left(\mathfrak{E}\right)\\
 & =\sum\limits _{\mathfrak{e}_{j}\in\mathfrak{E}}\sum\limits _{i\in\left\llbracket n\right\rrbracket }\delta\epsilon_{t+\frac{1}{2}}\left(\mathfrak{e}_{j}\mid v_{i}\right)\\
 & =\sum\limits _{\mathfrak{e}_{j}\in\mathfrak{E}}\sum\limits _{i\in\left\llbracket n\right\rrbracket }\dfrac{g_{V}\left(f_{V}\left(v_{i},\mathfrak{e}_{j}\right)\right)}{G_{V}\left(v_{i}\right)}\alpha_{t}\left(v_{i}\right)\\
 & =\sum\limits _{i\in\left\llbracket n\right\rrbracket }\alpha_{t}\left(v_{i}\right)\dfrac{\sum\limits _{\mathfrak{e}_{j}\in\mathfrak{E}}g_{V}\left(f_{V}\left(v_{i},\mathfrak{e}_{j}\right)\right)}{G_{V}\left(v_{i}\right)}\\
 & =\sum\limits _{i\in\left\llbracket n\right\rrbracket }\alpha_{t}\left(v_{i}\right)\\
 & =1.
\end{align*}
\end{proof}

We introduce the \textbf{vertex overall bias matrix}: $G_{V}\overset{\Delta}{=}\text{diag}\left(\left(G_{V}\left(v_{i}\right)\right)_{i\in\left\llbracket n\right\rrbracket }\right)$
and the \textbf{biased vertex-feature matrix}: $B_{V}\overset{\Delta}{=}\left[g_{V}\left(f_{V}\left(v_{i},\mathfrak{e}_{j}\right)\right)\right]_{\substack{i\in\left\llbracket n\right\rrbracket \\
j\in\left\llbracket p\right\rrbracket 
}
}.$ It holds:
\begin{equation}
P_{\mathfrak{E},t+\frac{1}{2}}=P_{V,t}G_{V}^{-1}B_{V}.\label{eq:P_E_t_0_5-1}
\end{equation}

\textbf{During the second phase that starts at time $t+\dfrac{1}{2}$},
the values held by the hb-edges are transferred to the vertices. The
contribution to $\alpha_{t+1}\left(v_{i}\right)$ given by a hb-edge
$\mathfrak{e}_{j}$ is proportional to $\epsilon_{t+\frac{1}{2}}$
in a factor corresponding to the biased probability $\widetilde{p_{E}}\left(v_{i}|\mathfrak{e}_{j}\right):$
\[
\delta\alpha_{t+1}\left(v_{i}\mid\mathfrak{e}_{j}\right)=\widetilde{p_{E}}\left(v_{i}|\mathfrak{e}_{j}\right)\epsilon_{t+\frac{1}{2}}\left(\mathfrak{e}_{j}\right).
\]
Hence, we have: $\alpha_{t+1}\left(v_{i}\right)=\sum\limits _{j=1}^{p}\delta\alpha_{t+1}\left(v_{i}\mid\mathfrak{e}_{j}\right)$
and:
\[
\epsilon_{t+1}\left(\mathfrak{e}_{j}\right)=\epsilon_{t+\frac{1}{2}}\left(\mathfrak{e}_{j}\right)-\sum\limits _{i=1}^{n}\delta\alpha_{t+1}\left(v_{i}\mid\mathfrak{e}_{j}\right).
\]

\begin{claim}[The hb-edges have no value at $t+1$]

It holds: $\epsilon_{t+1}\left(\mathfrak{e}_{j}\right)=0.$\end{claim}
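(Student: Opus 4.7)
The plan is to mirror the proof of the first claim (no information on vertices at $t+\tfrac{1}{2}$), since the second phase is structurally symmetric to the first: during $[t+\tfrac12, t+1]$ each hb-edge $\mathfrak{e}_j$ redistributes its value to the vertices with weights given by the biased probabilities $\widetilde{p_E}(v_i\mid\mathfrak{e}_j)$, in the same way that in the first phase each vertex redistributed its value to hb-edges with weights $\widetilde{p_V}(\mathfrak{e}_j\mid v_i)$.

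Concretely, I would start from the update rule
\[
\epsilon_{t+1}\left(\mathfrak{e}_{j}\right)=\epsilon_{t+\frac{1}{2}}\left(\mathfrak{e}_{j}\right)-\sum\limits _{i=1}^{n}\delta\alpha_{t+1}\left(v_{i}\mid\mathfrak{e}_{j}\right),
\]
substitute the definition $\delta\alpha_{t+1}(v_i\mid\mathfrak{e}_j)=\widetilde{p_E}(v_i\mid\mathfrak{e}_j)\,\epsilon_{t+\frac{1}{2}}(\mathfrak{e}_j)$, and factor $\epsilon_{t+\frac{1}{2}}(\mathfrak{e}_j)$ out of the sum. This yields
\[
\epsilon_{t+1}\left(\mathfrak{e}_{j}\right)=\epsilon_{t+\frac{1}{2}}\left(\mathfrak{e}_{j}\right)\left(1-\sum_{i\in\llbracket n\rrbracket}\widetilde{p_{E}}\left(v_{i}\mid\mathfrak{e}_{j}\right)\right).
\]

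The key step is then to observe that the biased transition probabilities from a fixed hb-edge sum to $1$. Unfolding the definition,
\[
\sum_{i\in\llbracket n\rrbracket}\widetilde{p_{E}}\left(v_{i}\mid\mathfrak{e}_{j}\right)=\frac{\sum_{i\in\llbracket n\rrbracket}g_{E}\!\left(f_{E}\left(\mathfrak{e}_{j},v_{i}\right)\right)}{G_{E}\left(\mathfrak{e}_{j}\right)}=1
\]
by the very definition of the hb-edge overall bias $G_E(\mathfrak{e}_j)$. Thus the bracketed factor vanishes and $\epsilon_{t+1}(\mathfrak{e}_j)=0$ for all $j\in\llbracket p\rrbracket$.

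There is no real obstacle here: the statement is an immediate normalization consequence of how $\widetilde{p_E}$ was defined (exactly as the companion claim was for $\widetilde{p_V}$), and the argument is two lines of algebra; the only thing to be careful about is to cite the normalization identity rather than re-deriving it, so the proof remains symmetric in form to the vertex-side proof already given.
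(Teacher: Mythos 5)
Your proof is correct and is exactly the argument the paper intends: the paper's own proof of this claim simply says it is ``similar to the one of the first phase,'' and your two-line computation---factoring out $\epsilon_{t+\frac{1}{2}}\left(\mathfrak{e}_{j}\right)$ and using that $\sum_{i}\widetilde{p_{E}}\left(v_{i}\mid\mathfrak{e}_{j}\right)=1$ by the definition of $G_{E}\left(\mathfrak{e}_{j}\right)$---is precisely that symmetric argument spelled out. No gaps; nothing further needed.
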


\begin{proof}Similar to the one of the first phase for $\alpha_{t+\frac{1}{2}}\left(v_{i}\right).$

\end{proof}

\pagebreak

\begin{claim}[Conservation of the information of the hb-graph at $t+1$]

It holds: $I_{t+1}\left(\mathfrak{H}\right)=1.$\end{claim}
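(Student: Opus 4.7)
The plan is to mirror exactly the argument used for the claim about conservation at time $t+\tfrac{1}{2}$, but playing the roles of vertices and hb-edges in reverse. The previous claim tells us that at time $t+1$ the hb-edges carry no value, so $I_{t+1}(\mathfrak{E})=0$ and it suffices to show that $I_{t+1}(V)=1$. This reduces the task to computing $\sum_{i\in\llbracket n\rrbracket}\alpha_{t+1}(v_i)$ and recognising it as $I_{t+\frac{1}{2}}(\mathfrak{H})$, which by the earlier conservation claim equals $1$.

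Concretely, I would expand $\alpha_{t+1}(v_i)=\sum_{j=1}^{p}\delta\alpha_{t+1}(v_i\mid\mathfrak{e}_j)$ using the defining formula $\delta\alpha_{t+1}(v_i\mid\mathfrak{e}_j)=\widetilde{p_E}(v_i\mid\mathfrak{e}_j)\,\epsilon_{t+\frac{1}{2}}(\mathfrak{e}_j)$, then swap the order of summation:
\[
I_{t+1}(V)=\sum_{i\in\llbracket n\rrbracket}\sum_{j\in\llbracket p\rrbracket}\widetilde{p_E}(v_i\mid\mathfrak{e}_j)\,\epsilon_{t+\frac{1}{2}}(\mathfrak{e}_j)
=\sum_{j\in\llbracket p\rrbracket}\epsilon_{t+\frac{1}{2}}(\mathfrak{e}_j)\sum_{i\in\llbracket n\rrbracket}\widetilde{p_E}(v_i\mid\mathfrak{e}_j).
\]
The inner sum is $\dfrac{\sum_{i}g_E(f_E(\mathfrak{e}_j,v_i))}{G_E(\mathfrak{e}_j)}=1$ by the very definition of $G_E(\mathfrak{e}_j)$, which is the normalising factor that makes $\widetilde{p_E}(\cdot\mid\mathfrak{e}_j)$ a probability distribution on $V$. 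Thus $I_{t+1}(V)=\sum_{j}\epsilon_{t+\frac{1}{2}}(\mathfrak{e}_j)=I_{t+\frac{1}{2}}(\mathfrak{E})$.

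Finally I would invoke the already established fact $I_{t+\frac{1}{2}}(\mathfrak{H})=1$ together with $\alpha_{t+\frac{1}{2}}(v_i)=0$ (from the first-phase claim) to conclude $I_{t+\frac{1}{2}}(\mathfrak{E})=1$, hence $I_{t+1}(\mathfrak{H})=0+1=1$. There is no real obstacle here: the whole argument is a bookkeeping exercise, and the only substantive ingredient is that $G_E(\mathfrak{e}_j)$ is defined precisely so that $\widetilde{p_E}(\cdot\mid\mathfrak{e}_j)$ sums to one over the vertex set, exactly symmetric to the role played by $G_V(v_i)$ in the first phase. One could equivalently give a one-line proof by noting that the claim is the exact dual of the earlier conservation claim under the vertex/hb-edge exchange, which is the style the authors adopt for the companion claim about $\epsilon_{t+1}$.
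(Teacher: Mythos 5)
Your proof is correct and is precisely the argument the paper intends: its own proof of this claim is just the remark ``Similar to the one for the first phase,'' and your write-up spells out that dual argument (hb-edges empty at $t+1$, swap the double sum, use the normalisation by $G_{E}\left(\mathfrak{e}_{j}\right)$, then invoke conservation and emptiness of vertices at $t+\frac{1}{2}$). Nothing is missing; you have simply made explicit what the paper leaves implicit.
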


\begin{proof}Similar to the one for the first phase.

\end{proof}

We now introduce $G_{\mathfrak{E}}\overset{\Delta}{=}\text{diag}\left(\left(G_{E}\left(\mathfrak{e}_{j}\right)\right)_{j\in\left\llbracket p\right\rrbracket }\right)$
the diagonal matrix of size $p\times p$ and the \textbf{biased hb-edge-feature
matrix}: $B_{E}\overset{\Delta}{=}\left[g_{E}\left(f_{E}\left(\mathfrak{e}_{j},v_{i}\right)\right)\right]_{\substack{j\in\left\llbracket p\right\rrbracket \\
i\in\left\llbracket n\right\rrbracket 
}
},$ it comes:
\begin{equation}
P_{\mathfrak{E},t+\frac{1}{2}}G_{\mathfrak{E}}^{-1}B_{E}=P_{V,t+1}.\label{eq:P_V_t_1-1}
\end{equation}
Regrouping (\ref{eq:P_E_t_0_5-1}) and (\ref{eq:P_V_t_1-1}):
\begin{equation}
P_{V,t+1}=P_{V,t}G_{V}^{-1}B_{V}G_{\mathfrak{E}}^{-1}B_{E}.\label{eq:P_V_tplus1-1}
\end{equation}
It is valuable to keep a trace of the intermediate state: $P_{\mathfrak{E},t+\frac{1}{2}}=P_{V,t}G_{V}^{-1}B_{V}$
as it records the information on hb-edges.

Writing $T=G_{V}^{-1}B_{V}G_{\mathfrak{E}}^{-1}B_{E}$, it follows
from \ref{eq:P_V_tplus1-1}: $P_{V,t+1}=P_{V,t}T.$

\begin{claim}[Stochastic transition matrix]

$T$ is a square row stochastic matrix of dimension $n.$

\end{claim}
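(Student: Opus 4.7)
The plan is to check three things separately: that $T$ is square of size $n$, that every entry is non-negative, and that every row sums to $1$.

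The dimension count is routine. Since $G_V^{-1}$ is $n \times n$, $B_V$ is $n \times p$, $G_{\mathfrak{E}}^{-1}$ is $p \times p$, and $B_E$ is $p \times n$, the product $T = G_V^{-1} B_V G_{\mathfrak{E}}^{-1} B_E$ is indeed $n \times n$. For non-negativity, the entries of $B_V$ and $B_E$ are images of $g_V \circ f_V$ and $g_E \circ f_E$, both valued in $\mathbb{R}^+$ by construction; the diagonal matrices $G_V^{-1}$ and $G_{\mathfrak{E}}^{-1}$ are entrywise positive (on every vertex and hb-edge that carries any mass at all, else the biased probabilities would not have been defined in the first place), so the product stays non-negative.

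The content of the claim is therefore in the row sums, and the natural approach is to recognize the intermediate factors as the biased transition probabilities that were already shown to be normalized. Namely, $(G_V^{-1} B_V)_{ij} = \widetilde{p_V}(\mathfrak{e}_j \mid v_i)$ by the very definition of $G_V(v_i)$, and $(G_{\mathfrak{E}}^{-1} B_E)_{jk} = \widetilde{p_E}(v_k \mid \mathfrak{e}_j)$ by the definition of $G_E(\mathfrak{e}_j)$. Then
\[
\sum_{k=1}^{n} T_{ik} \;=\; \sum_{j=1}^{p} \widetilde{p_V}(\mathfrak{e}_j \mid v_i) \sum_{k=1}^{n} \widetilde{p_E}(v_k \mid \mathfrak{e}_j) \;=\; \sum_{j=1}^{p} \widetilde{p_V}(\mathfrak{e}_j \mid v_i) \cdot 1 \;=\; 1,
\]
where the inner sum collapses by the definition of $G_E(\mathfrak{e}_j)$ and the outer one by the definition of $G_V(v_i)$.

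There is no genuine obstacle here: the entire argument is just an unfolding of the normalization constants $G_V$ and $G_{\mathfrak{E}}$ that were baked into the biased probabilities. A slightly more abstract alternative, if one prefers not to recompute, is to invoke the two conservation claims proved above: specializing the initial state to $P_{V,t} = e_i$ (the $i$-th standard basis row vector, which satisfies $I_t(\mathfrak{H}) = 1$), the identity $P_{V,t+1} = P_{V,t} T$ combined with $I_{t+1}(\mathfrak{H}) = 1$ and $\epsilon_{t+1}(\mathfrak{e}_j) = 0$ forces the $i$-th row of $T$ to sum to $1$.
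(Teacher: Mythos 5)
Your proof is correct and follows essentially the same route as the paper: both decompose $T$ into the factors $G_V^{-1}B_V$ and $G_{\mathfrak{E}}^{-1}B_E$, identify their entries with the biased transition probabilities $\widetilde{p_V}\left(\mathfrak{e}_j\mid v_i\right)$ and $\widetilde{p_E}\left(v_k\mid\mathfrak{e}_j\right)$, and obtain the row sums of $T$ by exchanging the order of summation so that each factor's normalization collapses to $1$. Your explicit attention to dimensions and non-negativity, and the alternative argument via the conservation claims, are harmless additions but not a different method.
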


\begin{proof}

Let: $A=\left(a_{ij}\right)_{\substack{i\in\left\llbracket n\right\rrbracket \\
j\in\left\llbracket p\right\rrbracket 
}
}=G_{V}^{-1}B_{V}\in M_{n,p}$ and: $B=\left(b_{jk}\right)_{\substack{j\in\left\llbracket p\right\rrbracket \\
k\in\left\llbracket n\right\rrbracket 
}
}=G_{\mathfrak{E}}^{-1}B_{E}\in M_{p,n}.$ $A$ and $B$ are non-negative rectangular matrices. Moreover:
\begin{itemize}
\item $a_{ij}=\widetilde{p_{V}}\left(\mathfrak{e}_{j}|v_{i}\right)$ and:
$\sum\limits _{j\in\left\llbracket p\right\rrbracket }a_{ij}=\sum\limits _{j\in\left\llbracket p\right\rrbracket }\widetilde{p_{V}}\left(\mathfrak{e}_{j}|v_{i}\right)=\sum\limits _{j\in\left\llbracket p\right\rrbracket }\dfrac{g_{V}\left(f_{V}\left(v_{i},\mathfrak{e}_{j}\right)\right)}{G_{V}\left(v_{i}\right)}=1;$
\item $b_{jk}=\widetilde{p_{E}}\left(v_{k}|\mathfrak{e}_{j}\right)$ and:
$\sum\limits _{k\in\left\llbracket n\right\rrbracket }b_{jk}=\sum\limits _{k\in\left\llbracket n\right\rrbracket }\widetilde{p_{E}}\left(v_{k}|\mathfrak{e}_{j}\right)=\dfrac{\sum\limits _{k\in\left\llbracket n\right\rrbracket }g_{E}\left(f_{E}\left(\mathfrak{e}_{j},v_{k}\right)\right)}{G_{E}\left(\mathfrak{e_{j}}\right)}=1.$
\end{itemize}
We have: $P_{V,t+1}=P_{V,t}AB,$ where: $AB=\left(\sum\limits _{j\in\left\llbracket p\right\rrbracket }a_{ij}b_{jk}\right)_{\substack{i\in\left\llbracket n\right\rrbracket \\
k\in\left\llbracket n\right\rrbracket 
}
}.$

It yields: $\sum\limits _{k\in\left\llbracket n\right\rrbracket }\sum\limits _{j\in\left\llbracket p\right\rrbracket }a_{ij}b_{jk}=\sum\limits _{j\in\left\llbracket p\right\rrbracket }a_{ij}\sum\limits _{k\in\left\llbracket n\right\rrbracket }b_{jk}=\sum\limits _{j\in\left\llbracket p\right\rrbracket }a_{ij}=1.$

Hence $AB$ is a non-negative square matrix with its row sums all
equal to 1: it is a row stochastic matrix.

\end{proof}

\begin{claim}[Properties of T]

Assuming that the hb-graph is connected, the biased feature exchange-based
diffusion matrix $T$ is aperiodic and irreducible.

\end{claim}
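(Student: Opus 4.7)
The plan is to reduce both properties to combinatorial statements about the hb-graph via the factorization $T = A B$ established in the preceding claim, where $A = G_V^{-1} B_V$ and $B = G_{\mathfrak{E}}^{-1} B_E$.

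First I would pin down the positivity pattern of the two factors. Entry $a_{ij} = \widetilde{p_V}(\mathfrak{e}_j \mid v_i)$ is strictly positive exactly when $v_i$ has positive multiplicity in $\mathfrak{e}_j$ (so that $g_V(f_V(v_i,\mathfrak{e}_j)) > 0$, assuming $g_V > 0$ on $\mathbb{R}_+^*$, which is the case for the typical choices $x^{\alpha}$ and $e^{\alpha x}$), and similarly for $b_{jk}$. Hence
\[
T_{ik} \;=\; \sum_{j \in \llbracket p\rrbracket} a_{ij}\, b_{jk} \;>\; 0
\]
if and only if there exists some hb-edge $\mathfrak{e}_j$ with $m_{\mathfrak{e}_j}(v_i) > 0$ and $m_{\mathfrak{e}_j}(v_k) > 0$, i.e.\ $v_i$ and $v_k$ share at least one hb-edge.

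For \textbf{irreducibility}, I would invoke the assumed connectedness of the hb-graph: for any pair $(v_i, v_k)$ there is a sequence $v_i = v_{i_0}, \mathfrak{e}_{j_1}, v_{i_1}, \mathfrak{e}_{j_2}, \ldots, \mathfrak{e}_{j_\ell}, v_{i_\ell} = v_k$ in which each consecutive pair of vertices lies in a common hb-edge. By the positivity characterization above, $T_{i_{s-1}, i_s} > 0$ for every $s$, hence $(T^{\ell})_{ik} \geq \prod_{s=1}^{\ell} T_{i_{s-1}, i_s} > 0$. This shows that the directed graph of $T$ is strongly connected, so $T$ is irreducible.

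For \textbf{aperiodicity}, I would show that every diagonal entry of $T$ is positive, which immediately gives period $1$ at every state and hence aperiodicity of the chain. Indeed, since the hb-graph is connected and (we may assume) contains at least one hb-edge, each vertex $v_i$ belongs to some hb-edge $\mathfrak{e}_{j}$, so taking $k = i$ in the shared-hb-edge criterion above yields $T_{ii} \geq a_{ij} b_{ji} > 0$. Thus every state has a self-loop in the transition graph and $\gcd\{m : (T^{m})_{ii} > 0\} = 1$.

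The only subtlety I expect is making sure the notion of \emph{connectedness} used matches the one needed (e.g.\ the 2-section graph of $\mathfrak{H}$ being connected, which is the standard one for hypergraphs), and explicitly noting the mild assumption $g_V, g_E > 0$ on the positive reals so that the incidence pattern of $B_V$ and $B_E$ coincides with that of $H$. Once these are fixed, the argument is essentially a transcription of the usual graph-theoretic proof for bipartite random walks, the bipartition here being $V \sqcup \mathfrak{E}$ collapsed to $V$ by taking two steps.
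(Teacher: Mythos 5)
Your proposal is correct and follows essentially the same route as the paper: aperiodicity from $t_{ii}>0$ (each vertex recovers part of the value it sent to an incident hb-edge) and irreducibility from connectedness of the hb-graph, your version simply spelling out the positivity pattern of $A$ and $B$ and the chaining argument that the paper leaves implicit. Your explicit remark that one must assume $g_{V},g_{E}>0$ on the positive reals (true for the paper's typical choices) is a useful precision, not a departure in method.
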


\begin{proof}

This stochastic matrix is aperiodic, due to the fact that any vertex
of the hb-graph retrieves a part of the value it has given to the
hb-edge, hence $t_{ii}>0$ for all $i\in\left\llbracket n\right\rrbracket $.
Moreover, as the hb-graph is connected, the matrix is irreducible
as any state can be joined from any other state.

\end{proof}

The fact that $T$ is a stochastic matrix aperiodic and irreducible
for a connected hb-graph ensures that $\left(\alpha_{t}\right)_{t\in\mathbb{N}}$
converges to a stationary state which is the probability vector $\pi_{V}$
associated to the eigenvalue 1 of $T$. Nonetheless, due to the presence
of the different functions for vertices and hb-edges, the simplifications
do not occur anymore as in \cite{ouvrard2018diffusion,ouvrard2019diffusion}
and thus we do not have an explicit expression for the stationary
state vector of the vertices.

The same occurs for the expression of the hb-edge stationary state
vector $\pi_{E}$ which is still calculated from $\pi_{V}$ using
the following formula: $\pi_{E}=\pi_{V}G_{V}^{-1}B_{V}.$

\section{Results and Evaluation}

\label{sec:Results-and-Evaluation}

We consider different biases on a randomly generated hb-graph using
still the same features that in the exchange-based diffusion realized
in \cite{ouvrard2018diffusion,ouvrard2019diffusion}. We generate
hb-graphs with 200 collaborations---built out of 10,000 potential
vertices---with a maximum m-cardinality of 20, such that the hb-graph
has five groups that are generated with two of the vertices chosen
out of a group of 10, that have to occur in each of the collaboration;
there are 20 vertices that have to stand as central vertices, i.e.\
that ensures the connectivity in between the different groups of the
hb-graph.

The approach is similar to the one taken in \cite{ouvrard2018diffusion,ouvrard2019diffusion},
using the same hb-edge based vertex abstract information function
and the same vertex-based hb-edge abstract information function, but
putting different biases as it is presented in Table \ref{Tab: biases used in simulation}.

\begin{table}
\begin{center}%
\begin{tabular}{ccccccc}
\cellcolor{blue!50}Experiment & \cellcolor{cyan!50}1 & \cellcolor{blue!50}2 & \cellcolor{cyan!50}3 & \cellcolor{blue!50}4 & \cellcolor{cyan!50}5 & \tabularnewline
\cellcolor{blue!10}Vertex bias function $g_{V}\left(x\right)=$ & \cellcolor{cyan!10}$x$ & \cellcolor{blue!10}$x^{2}$ & \cellcolor{cyan!10}$x^{0.2}$ & \cellcolor{blue!10}$e^{2x}$ & \cellcolor{cyan!10}$e^{-2x}$ & \tabularnewline
\cellcolor{blue!30}Hb-edge bias function $g_{E}\left(x\right)=$ & \cellcolor{cyan!30}$x$ & \cellcolor{blue!30}$x^{2}$ & \cellcolor{cyan!30}$x^{0.2}$ & \cellcolor{blue!30}$e^{2x}$ & \cellcolor{cyan!30}$e^{-2x}$ & \tabularnewline
 &  &  &  &  &  & \tabularnewline
\cellcolor{blue!50}Experiment & \cellcolor{cyan!50}6 & \cellcolor{blue!50}7 & \cellcolor{cyan!50}8 & \cellcolor{blue!50}9 &  & \tabularnewline
\cellcolor{blue!10}Vertex bias function $g_{V}\left(x\right)=$ & \cellcolor{cyan!10}$x^{2}$ & \cellcolor{blue!10}$e^{2x}$ & \cellcolor{cyan!10}$x^{0.2}$ & \cellcolor{blue!10}$e^{-2x}$ &  & \tabularnewline
\cellcolor{blue!30}Hb-edge bias function $g_{E}\left(x\right)=$ & \cellcolor{cyan!30}$x$ & \cellcolor{blue!30}$x$ & \cellcolor{cyan!30}$x$ & \cellcolor{blue!30}$x$ &  & \tabularnewline
 &  &  &  &  &  & \tabularnewline
\cellcolor{blue!50}Experiment & \cellcolor{cyan!50}10 & \cellcolor{blue!50}11 & \cellcolor{cyan!50}12 & \cellcolor{blue!50}13 & \cellcolor{cyan!50}14 & \cellcolor{blue!50}15\tabularnewline
\cellcolor{blue!10}Vertex bias function $g_{V}\left(x\right)=$ & \cellcolor{cyan!10}$x$ & \cellcolor{blue!10}$x$ & \cellcolor{cyan!10}$x$ & \cellcolor{blue!10}$x$ & \cellcolor{cyan!10}$e^{2x}$ & \cellcolor{blue!10}$e^{-2x}$\tabularnewline
\cellcolor{blue!30}Hb-edge bias function $g_{E}\left(x\right)=$ & \cellcolor{cyan!30}$x^{2}$ & \cellcolor{blue!30}$e^{2x}$ & \cellcolor{cyan!30}$x^{0.2}$ & \cellcolor{blue!30}$e^{-2x}$ & \cellcolor{cyan!30}$e^{-2x}$ & \cellcolor{blue!30}$e^{2x}$\tabularnewline
\end{tabular}\end{center}

\caption{Biases used during the 15 experiments.}

\label{Tab: biases used in simulation}
\end{table}

We compare the rankings obtained on vertices and hb-edges after 200
iterations of the exchange-based diffusion using the strict and large
Kendall tau correlation coefficients for the different biases proposed
in Table \ref{Tab: biases used in simulation}. We present the results
as a visualisation of correlation matrices in Figure \ref{Fig:Biases and rankings vertices-strict kendall}
and in Figure \ref{Fig:Biases and rankings hb-edges-strict Kendall},
lines and columns of these matrices being ordered by the experiment
index presented in Table \ref{Tab: biases used in simulation}. 

We write $\sigma_{i,t}$ the ranking obtained with Experiment $i$
biases for $t\in\left\{ V,E\right\} $ indicating whether the ranking
is performed on vertices or hb-edges---the absence of $t$ means
that it works for both rankings. The ranking obtained by Experiment
1 is called the reference ranking.

In Experiments 2 to 5, the same bias is applied to both vertices and
hb-edges. In Experiments 2 and 3, the biases are increasing functions
on $\left[0;+\infty\right[,$ while in Experiments 4 and 5, they are
decreasing functions. 

Experiments 2 and 3 lead to rankings that are well correlated with
the reference ranking given the large Kendall tau correlation coefficient
value. The higher value of $\tau_{L}\left(\sigma_{i},\sigma_{j}\right)$
compared to the one of $\tau\left(\sigma_{i},\sigma_{j}\right)$ marks
the fact that the rankings with pair of similar biases agree with
the ties in this case. The exponential bias yields to a ranking that
is more granular in the tail for vertices, and reshuffles the way
the hb-edges are ranked; similar observations can be done for both
the vertex and hb-edge rankings in Experiments 2 and 3. 

In Experiments 4 and 5, the rankings remain well correlated with the
reference ranking but the large Kendall tau correlation coefficient
values show that there is much less agreement on the ties, but it
is very punctual in the rankings, with again more discrimination with
an exponential bias. This slight changes imply a reshuffling of the
hb-edge rankings in both cases, significantly emphasized by the exponential
form.

None of these simultaneous pairs of biases reshuffle very differently
the rankings obtained in the head of the rankings of vertices, but
most of them have implications on the head of the rankings of the
hb-edges: typical examples are given in Figure \ref{Fig:Effect_biases_on_rankings_similar}.
It would need further investigations using the Jaccard index.

Dissimilarities in rankings occur when the bias is applied only to
vertices or to hb-edges. The strict Kendall tau correlation coefficients
between the rankings obtained when applying the bias of Experiments
6 to 9---bias on vertices---and 10 to 13---bias on hb-edges---and
the reference ranking for the vertices show weak consistency for vertices
with values around 0.4---Figure \ref{Fig:Biases and rankings vertices-strict kendall}
(a)---, while the large Kendall tau correlation coefficient values
show a small disagreement with values around -0.1---Figure \ref{Fig:Biases and rankings vertices-strict kendall}
(b). For hb-edges, the gap is much less between the strict---values
around 0.7 as shown in Figure \ref{Fig:Biases and rankings hb-edges-strict Kendall}
(a)---and large Kendall tau correlation coefficient values---with
values around 0.6 as shown in Figure \ref{Fig:Biases and rankings hb-edges-strict Kendall}
(b).

Biases with same monotony variations---$g_{t}\left(x\right)=x^{2}$
and $g_{t}\left(x\right)=e^{2x}$ on the one hand and $g_{t}\left(x\right)=x^{0.2}$
and $g_{t}\left(x\right)=e^{-2x}$ on the other hand---have similar
effects independently of their application to vertices xor to hb-edges.
It is also worth to remark that increasing biases lead to rankings
that have no specific agreement or disagreement with rankings of decreasing
biases---as it is shown with $\tau_{L}\left(\sigma_{i},\sigma_{j}\right)$
and $\tau\left(\sigma_{i},\sigma_{j}\right)$ for $i\in\left\llbracket 6;13\right\rrbracket .$

We remark also that increasing biases applied only to vertices correlate
with the corresponding decreasing biases applied only to hb-edges,
and vice-versa. This is the case for Experiments 6 and 12, Experiments
7 and 13, Experiments 8 and 10, and Experiments 9 and 11 for both
vertices---Figures \ref{Fig:Biases and rankings vertices-strict kendall}
(a) and (b)---and hb-edges---Figures \ref{Fig:Biases and rankings hb-edges-strict Kendall}
(a) and (b).

Finally, we conduct two more experiments---Experiments 14 and 15---combining
the biases $g_{t}\left(x\right)=e^{2x}$ and $g_{t}\left(x\right)=e^{-2x}$
in two different manners. With no surprise, they reinforce the disagreement
with the reference ranking both on vertices and hb-edges, with a stronger
disagreement when the decreasing bias is put on vertices. We can remark
that Experiment 14---$g_{V}\left(x\right)=e^{2x}$ and $g_{E}\left(x\right)=e^{-2x}$---has
the strongest correlations with the rankings of dissimilar biases
that are either similar to the one of vertices---Experiments 6 and
7--- or to the one of hb-edges---Experiments 12 and 13.

\begin{figure}
\begin{center}\resizebox{\textwidth}{!}{%
\begin{tabular}{>{\centering}p{1\textwidth}}
(a) Strict Kendall tau correlation coefficient\tabularnewline
\includegraphics[height=0.45\textheight]{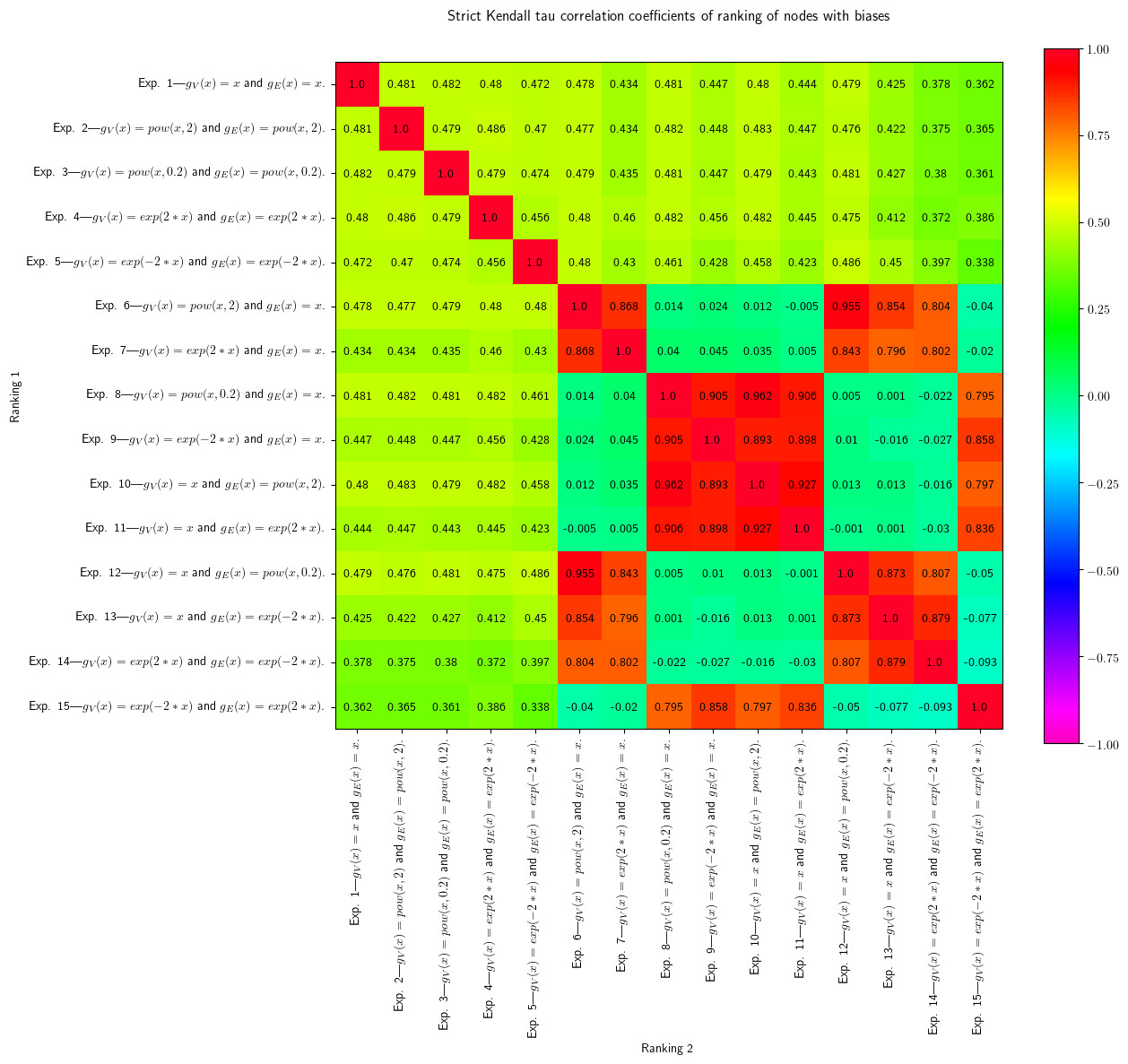}\tabularnewline
(b) Large Kendall tau correlation coefficient\tabularnewline
\includegraphics[height=0.45\textheight]{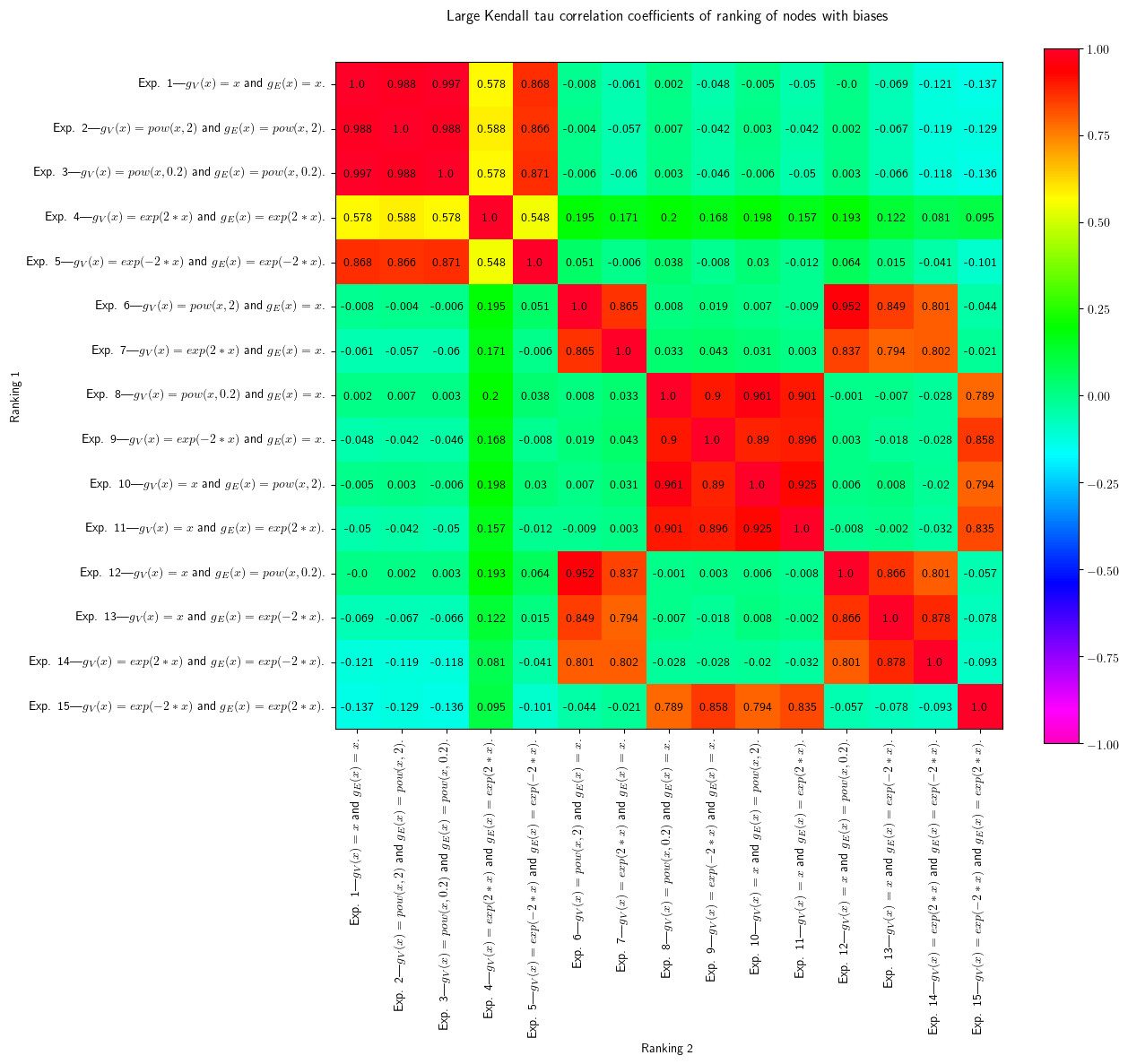}\tabularnewline
\end{tabular}}\end{center}

\caption{Strict (a) and large (b) Kendall tau correlation coefficient for node
ranking with biases. Realized on 100 random hb-graphs with 200 hb-edges
of maximal size 20, with 5 groups.}

\label{Fig:Biases and rankings vertices-strict kendall}
\end{figure}

\begin{figure}
\begin{center}\resizebox{\textwidth}{!}{%
\begin{tabular}{>{\centering}p{1\textwidth}}
(a) Strict Kendall tau correlation coefficient\tabularnewline
\includegraphics[height=0.45\textheight]{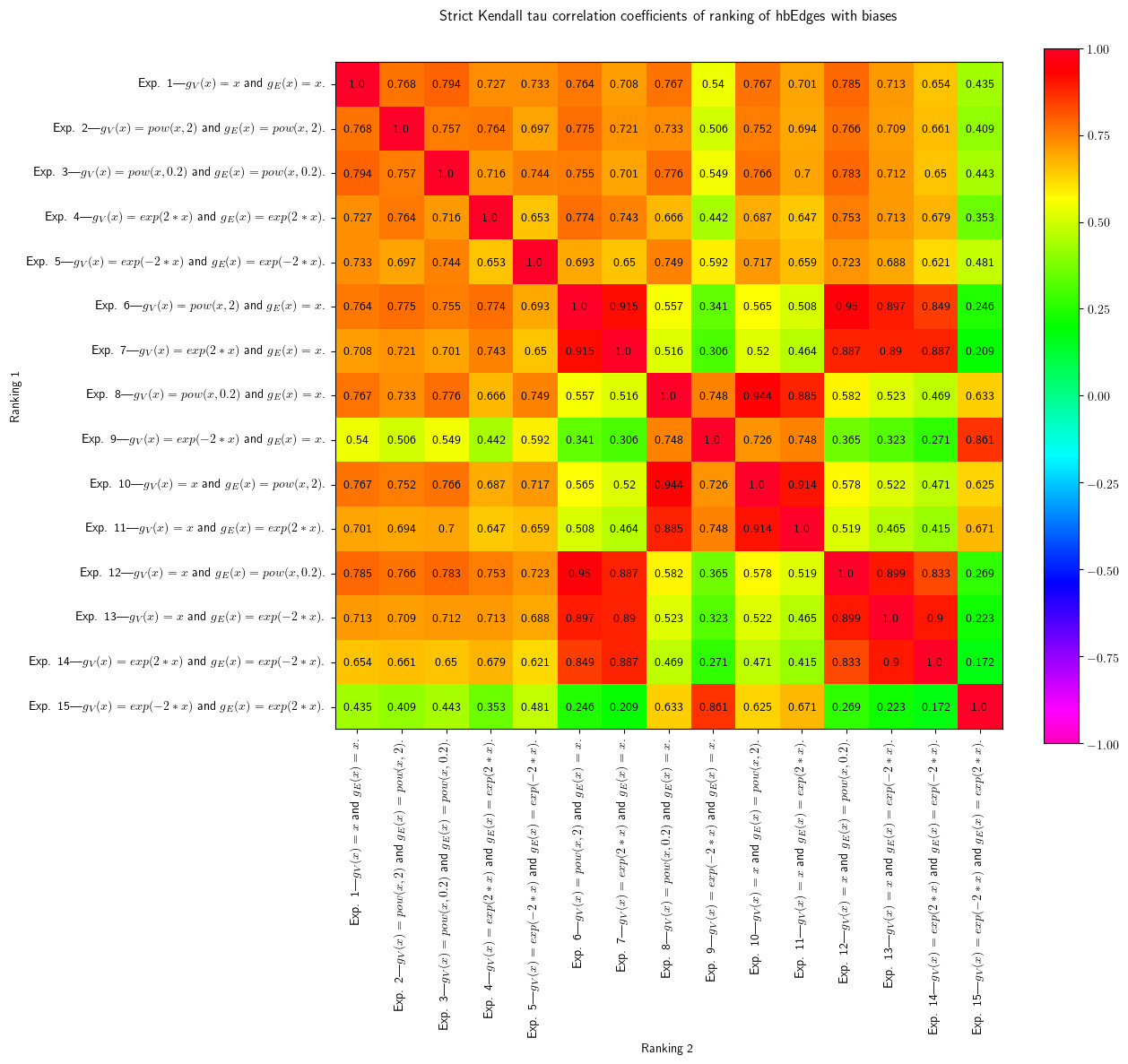}\tabularnewline
(b) Large Kendall tau correlation coefficient\tabularnewline
\includegraphics[height=0.45\textheight]{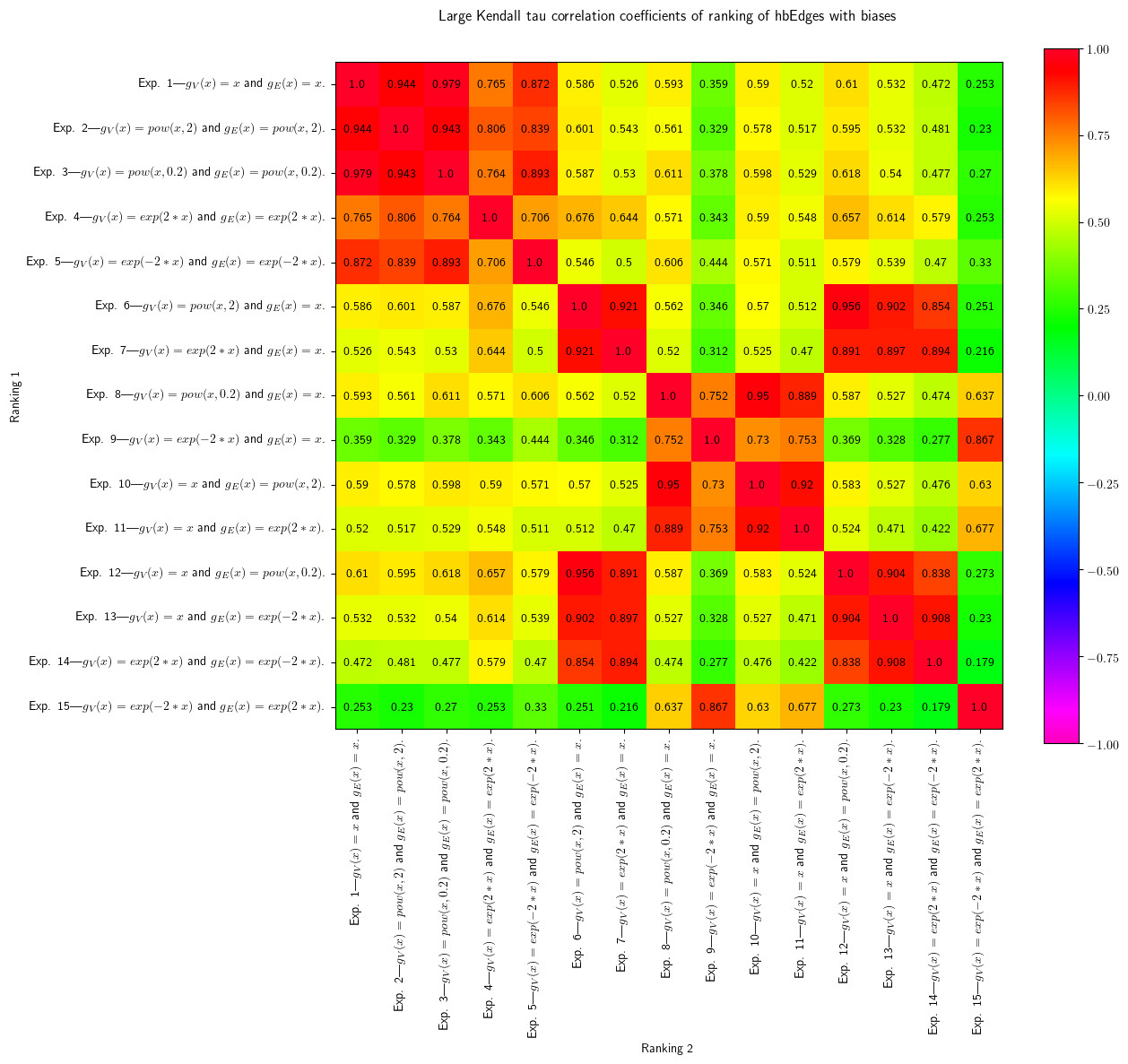}\tabularnewline
\end{tabular}}\end{center}

\caption{Strict (a) and large (b) Kendall tau correlation coefficient for hb-edge
ranking with biases. Realized on 100 random hb-graphs with 200 hb-edges
of maximal size 20, with 5 groups.}

\label{Fig:Biases and rankings hb-edges-strict Kendall}
\end{figure}

\begin{figure}
\begin{center}\resizebox*{!}{\dimexpr\textheight-2\baselineskip\relax}{%
\begin{tabular}{>{\centering}p{1\textwidth}}
\includegraphics[width=1\textwidth]{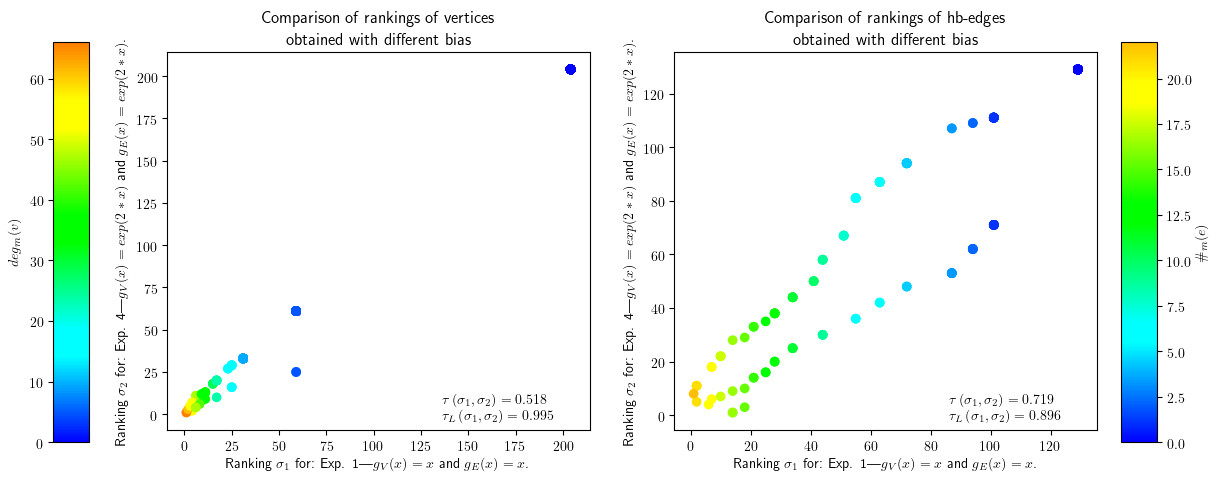}\tabularnewline
{\footnotesize{}(a) First ranking: $g_{V}\left(x\right)=x$ and $g_{E}\left(x\right)=x$;
Second ranking: $g_{V}\left(x\right)=e^{2x}$ and $g_{E}\left(x\right)=e^{2x}.$}\tabularnewline
\includegraphics[width=1\textwidth]{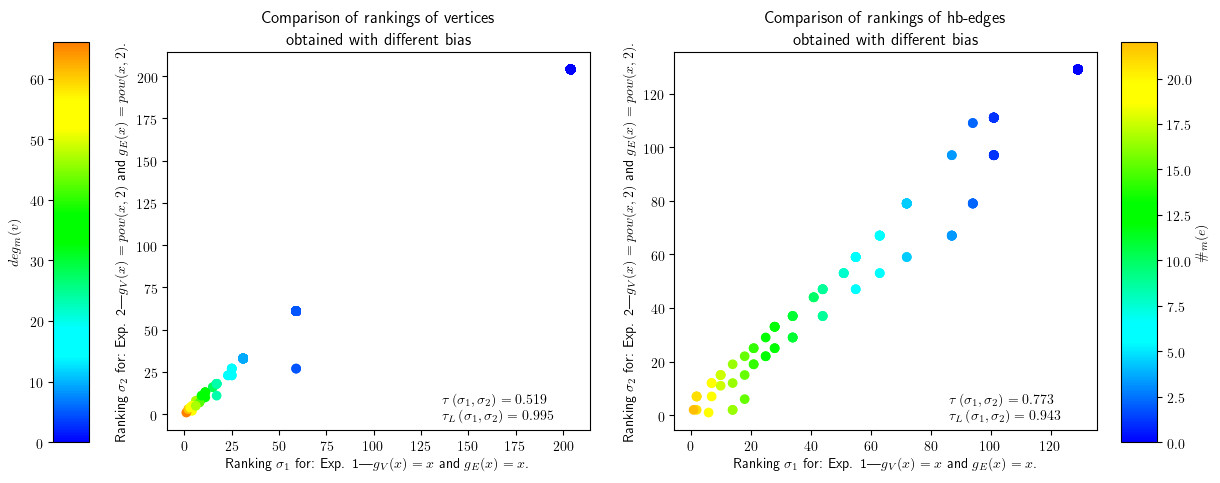}\tabularnewline
{\footnotesize{}(b) First ranking: $g_{V}\left(x\right)=x$ and $g_{E}\left(x\right)=x$;
Second ranking: $g_{V}\left(x\right)=x^{2}$ and $g_{E}\left(x\right)=x^{2}.$}\tabularnewline
\includegraphics[width=1\textwidth]{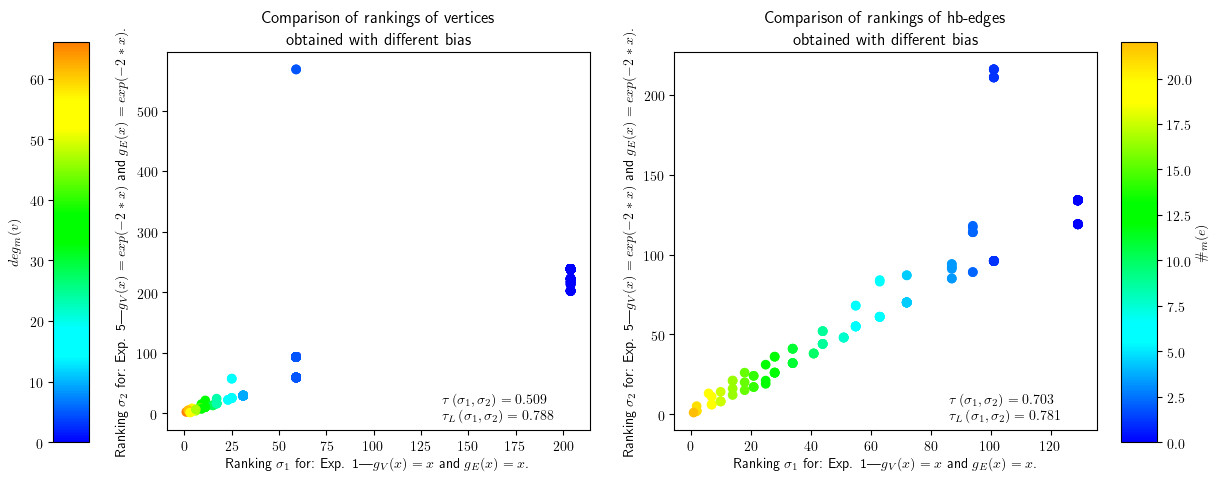}\tabularnewline
{\footnotesize{}(c) First ranking: $g_{V}\left(x\right)=x$ and $g_{E}\left(x\right)=x$;
Second ranking: $g_{V}\left(x\right)=e^{-2x}$ and $g_{E}\left(x\right)=e^{-2x}.$}\tabularnewline
\includegraphics[width=1\textwidth]{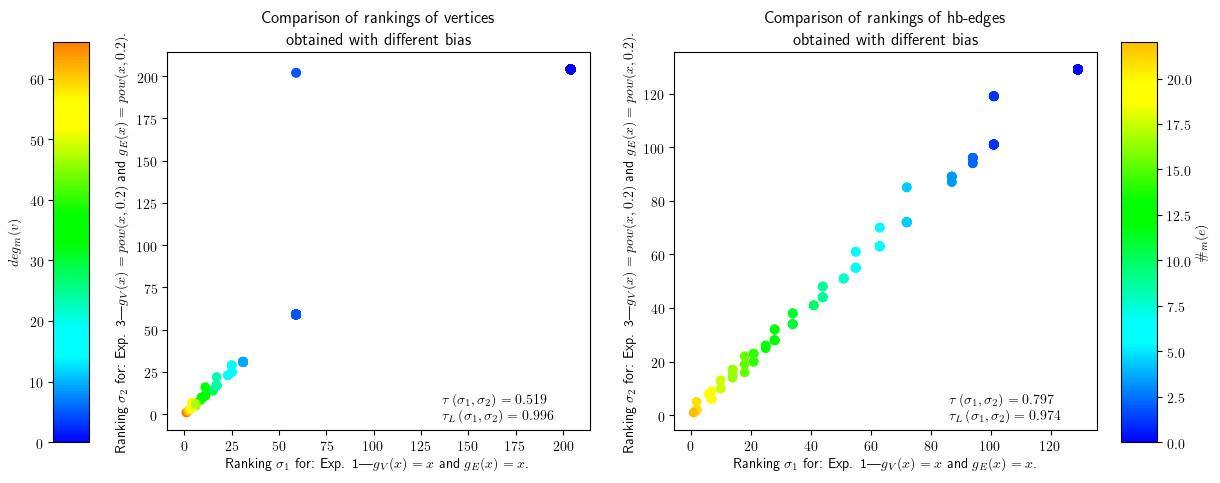}\tabularnewline
{\footnotesize{}(d) First ranking: $g_{V}\left(x\right)=x$ and $g_{E}\left(x\right)=x$;
Second ranking: $g_{V}\left(x\right)=x^{0.2}$ and $g_{E}\left(x\right)=x^{0.2}.$}\tabularnewline
\end{tabular}}\end{center}

\caption{Effect of vertex biases on ranking.}

\label{Fig:Effect_biases_on_rankings_similar}
\end{figure}

A last remark is on the variability of the results: if the values
of the correlation coefficients change, from one hb-graph to another,
the phenomenon observed remains the same, whatever the first hb-graph
observed; however, the number of experiments performed ensures already
a minimized fluctuation in these results.

\section{Further Comments}

\label{sec:Further-Comments}

The biased-exchange-based diffusion proposed in this Chapter enhances
a tunable diffusion that can be integrated into the hb-graph framework
to tune adequately the  ranking of the facets. The results obtained
on randomly generated hb-graphs have still to be applied to real hb-graphs,
with the known difficulty of the connectedness: it will be addressed
in future work. There remains a lot to explore on the subject in order
to refine the query results obtained with real searches. The difficulty
remains that in ground truth classification by experts, only a few
criteria can be retained, that ends up in most cases in pairwise comparison
of elements, and, hence, does not account for higher order relationships.

\bibliographystyle{ieeetr}

\end{document}